\pgfplotsset{compat=1.15}
\newcommand{\ceil}[1]{\left \lceil #1 \right \rceil}
\tikzstyle{startstop} = [rectangle, dashed, minimum width=1cm, minimum height=1cm,text centered, draw=black]
\tikzstyle{process} = [rectangle, minimum width=1.5cm, minimum height=1cm, text centered, draw=black]
\tikzstyle{colrec} = [rectangle, minimum width=1cm, minimum height=1cm, text centered, draw=black]
\tikzstyle{rec} = [rectangle, minimum width=1cm, minimum height=1cm, text centered, draw=black, text=purple]
\tikzstyle{arrow} = [thick,->,>=stealth]
\tikzstyle{arrows} = [thick,->,>=stealth]
\def \C{\mathcal{C}}
\def \D{\mathcal{D}}
\def \U{\mathcal{U}}
\def \X{\mathcal{X}}
\def \fx{\mathbf{x}}
\def \fy{\mathbf{y}}
\def \fY{\mathbf{Y}}
\def \f0{\mathbf{0}}
\definecolor{blau_1a}{RGB}{93,133,195}
\definecolor{blau_2a}{RGB}{0,156,218}
\definecolor{gruen_3a}{RGB}{80,182,149}
\definecolor{gruen_4a}{RGB}{175,204,80}
\definecolor{gruen_5a}{RGB}{221,223,72}
\definecolor{orange_6a}{RGB}{255,224,92}
\definecolor{orange_7a}{RGB}{248,186,60}
\definecolor{rot_8a}{RGB}{238,122,52}
\definecolor{rot_9a}{RGB}{233,80,62}
\definecolor{lila_10a}{RGB}{201,48,142}
\definecolor{lila_11a}{RGB}{128,69,151}
\definecolor{blau_1b}{RGB}{0,90,169}
\definecolor{blau_2b}{RGB}{0,131,204}
\definecolor{gruen_3b}{RGB}{0,157,129}
\definecolor{gruen_4b}{RGB}{153,192,0}
\definecolor{gruen_5b}{RGB}{201,212,0}
\definecolor{orange_6b}{RGB}{253,202,0}
\definecolor{orange_7b}{RGB}{245,163,0}
\definecolor{rot_8b}{RGB}{236,101,0}
\definecolor{rot_9b}{RGB}{230,0,26}
\definecolor{lila_10b}{RGB}{166,0,132}
\definecolor{lila_11b}{RGB}{114,16,133}
\definecolor{mycolor1}{rgb}{0.0, 0.18, 0.39}
\definecolor{mycolor2}{RGB}{87,108,67}
\definecolor{mycolor3}{RGB}{8,133,161}
\definecolor{mycolor4}{RGB}{80,91,161}
\definecolor{mycolor5}{RGB}{98,122,157}
\definecolor{mycolor6}{RGB}{255,163,67}
\definecolor{mycolor7}{RGB}{152,205,225}
\definecolor{mycolor8}{RGB}{242,204,48}
\definecolor{mycolor9}{rgb}{0,.5,0}
\definecolor{mycolor10}{rgb}{.59,.44,.09}
\definecolor{mycolor11}{RGB}{231,199,31} 
\definecolor{mycolor12}{RGB}{8,133,161} 
\definecolor{mycolor13}{RGB}{157,188,64} 
\definecolor{mycolor14}{RGB}{194,150,130} 
\definecolor{mycolor15}{RGB}{98,122,157} 
\definecolor{mycolor16}{RGB}{160,160,160} 
\definecolor{mycolor17}{RGB}{115,82,68} 
\definecolor{mycolor18}{RGB}{94,60,108} 
\definecolor{mycolor19}{RGB}{115,82,68} 
\definecolor{mycolor20}{RGB}{255,183,30} 
\theoremstyle{remark} \newtheorem{theorem}{Theorem}
\theoremstyle{remark} 
\theoremstyle{remark} \newtheorem{corollary}[theorem]{Corollary}
\theoremstyle{remark} 
\theoremstyle{remark} \newtheorem{definition}{Definition}
\theoremstyle{remark} \newtheorem{remark}{Remark}
\theoremstyle{remark}
\pgfplotsset{compat=1.15}
\crefname{equation}{Eq}{}
\definecolor{cell}{RGB}{220,230,240}
\definecolor{line}{RGB}{80,130,190}
\begin{document}
\title{Identification via Functions}
\author{
	\vspace{0.3cm}
    \fontsize{11}{11} \selectfont \IEEEauthorblockN{Mohammad Javad Salariseddigh\IEEEauthorrefmark{1} and Feriel Fendri\IEEEauthorrefmark{1}
    }
    \\
	\vspace{0.45cm}
    \fontsize{11}{11} \selectfont \IEEEauthorblockA{\IEEEauthorrefmark{1} Institute for Communications Engineering, Technical University of Munich
    }
    \\[.8em]
    Emails: \{mjss, feriel.fendri@tum.de\}
\vspace{-4mm}
}

\maketitle

\begin{abstract}
We develop a framework for studying the problem of identifying roots of a noisy function. We revisit a previous logarithmic bound on the number of observations and propose a general problem for identification of roots with three errors. As a key finding, we establish a novel logarithmic lower bound on the number of observations which outperforms the previous result across certain regimes of error and accuracy of the identification test. Furthermore, we recover the previous results for root identification as a special case and draw a connection to the message identification problem of Ahlswede.
\end{abstract}

\section{Introduction}
In the identification problem one is interested to verify reliably whether or not a desired message (e.g., event, graph node, object, etc) has been sent by the transmitter. The identification problem for channels was proposed by Ahlswede in \cite{AD89}. Studying the identification problem with deterministic encoding is explored extensively for basic information theoretical models in \cite{Salariseddigh_PhD_Diss}. Moreover, discussions on its relevance to goal oriented semantic communications can be found in \cite{Salariseddigh23_BSC_Future_Internet}. With regard to the framework in \cite{AD89} where receiver aims to identify a message which has experienced a noisy channel, in identification via functions, one may interpret a message as root of a function whose noisy observations are obtained after measurement. Then, given an arbitrary interval, the task is to determine whether or not the root of the function belongs to such interval. In this work we develop techniques to adopt the identification problem for noisy functions and formulate performance bounds on the number of observations which ensure a reliable identification of the roots.

Identification of roots of a function may find explicit applications in a number of different mathematical and engineering fields, including quantile identification, extrema identification, and finding solution to a general non-linear algebraic equation. Moreover, it can be used in eigenvalue problem with the context of numerical analysis or linear algebra \cite{Hoffmann71} where one may seek to realize whether a certain matrix has eigenvalue in a specific domain or not, before running an algorithm for the computation. In this paper, we develop a generalized test model for identification of roots of a noisy function. In the previous test with two type of errors \cite{KleinewaechterISIT97} when root of a function is coordinated in the silent gap then the test cannot identify the root even by increasing the number of observations arbitrary large, a rather obvious reason for such ill-posed situation is that the root is invisible to the test. This implies that if we devise a new test with different visible ranges, the root would probably be treated as it is in the visible range and one might identify it. However, this might comes in the expense of adding extra parameters related to the extra partitions and accuracy levels corresponding to the new silent gaps. In this work, we examine this approach by adding the number of silent gaps and derive fundamental bounds on the required number of observations of a function for identifying its root.
\subsection{Organization}
This work is structured as follows: In Section~\ref{Sec.II}, we review some of the well-known methods for finding roots of a general function. Section~\ref{Sec.III} provides problem formulation, required preliminaries regarding test functions, and previous results. Section~\ref{Sec.IV} includes main contributions and results. Next, in Section~\ref{Sec.V}, we compare the identification of messages and identification of roots in more detail. Finally, Section~\ref{Sec.VI} concludes with a summary and discussions.

\subsection{Notations}
We use the following notations throughout this paper: Blackboard bold letters $\mathbbmss{K,X,Y,Z}\ldots$ are used for alphabet sets. Lower case letters $x,y,z,\ldots$ stand for constants and values (realization) of random variables (RVs), and upper case letters $X,Y,Z,\ldots$ stand for random variables. Lower case bold symbol $\fx$ and $\fy$ stand for row vectors of size $n,$ that is, $\fx = (x_1, \dots, x_n)$ and $\fy = (y_1, \dots, y_n).$ The distribution of a random variable $X$ is specified by a probability mass function (PMF) $p_X(x)$ over a finite set $\X.$ The set of real line is denoted by $\mathbb{R}.$ The notation $\circ$ is used for composition of functions.
\section{Root Finding Algorithms}
\label{Sec.II}
A classic and major challenge in science and engineering is finding the roots of a nonlinear equation $f(x)=0,$ where $f(x): X \rightarrow X,$ for $X  \subset \mathbb{R}.$ The solutions must then be acquired by using numerical models based on iterative procedures \cite{McNamee13}. In this section, we review some of the conventional methods based on the iterative procedure which address such a problem.

\subsection{Newton-Raphson Method}
\label{Subsec.NR-Method}
The Newton-Raphson (NR) algorithm \cite{Newton} is one of the simplest and widely used method in numerical analysis literature for computation of the root of a non-linear algebraic function $f(x).$ We assume that $f(x)$ is monotonic, differentiable and whose first two derivatives are continuous in an interval $[a,b]$ which contains the root $\upkappa.$ This method begins with a point known as the starting point $x_1.$ To select the next point we approximate the function $f(x)$ with a tangent line passing through $f(x_1)$ which has the same slope as the derivative of the function at $x_1.$ The tangent line is expressed by the equation:
\begin{equation}
    L = (x-x_1)f'(x_1)+f(x_1) ,
\end{equation}
where $f'(\cdot)$ is the derivative of $f(\cdot).$ To find the next suggestion $x_2,$ we must find the point where this tangent line crosses 0, i.e., $L=0.$ We determine this approximation by the formula:
\begin{equation}
    x_{2} = x_1 - \frac{f(x_1)}{f'(x_1)} .
\end{equation}
We perform this procedure multiple times until we converge to a specific root, $\upkappa$ using the following iterative formula:

\begin{equation}
    \label{Eq.NRConv}
    x_{n+1} = x_n - \frac{f(x_n)}{f'(x_n)} .
\end{equation}

The following conditions have to be fulfilled to ensure this convergence:
\begin{enumerate}
    \item $x_{n+1}$ has to be between $x_n$ and $\upkappa.$
    \item There must be no other roots of the function f between $\upkappa$ and $x_1.$
\end{enumerate}

The sequence $x_n$ converge to a value $x^* \in [\upkappa, x_1]$ and, from \eqref{Eq.NRConv} we can assume that $$\lim_{n \to \infty} \frac{f(x_n)}{f'(x_n)} = 0 ,$$ i.e, $f(x^*) = 0.$

\subsection{Robbins and Monro Method} 
The Robbins–Monro (RM) algorithm, proposed in \cite{Robbins51} offer a methodology for finding root of a stochastic function. The function is not defined explicitly and is represented as an expected value of the noisy samples. That is, only noisy observations (measurements) can be made at any desired value. This algorithm is basically a modification of NR algorithm given in Subsection~\ref{Subsec.NR-Method}. Let $M(x)$ be an unknown real-valued, monotonically increasing function with:
\begin{equation*}
    M(x)=\alpha ,
\end{equation*} 
where $\alpha$ is a given constant, in purpose to find the  unique root $x=\upkappa$ of this equation. For determining the value of $\upkappa,$ the first step is the more or less arbitrary choice of one or more values then successively extract new values $x_n$ as certain functions of the previously acquired $x_1,\ldots,x_{n-1},$ the values $M(x_1),\ldots,M(x_{n-1})$ in such a way that:
\begin{equation*}
    \lim_{n\to\infty} x_{n} = \upkappa .
\end{equation*}
The level of convergence in the equation and the ease of computation of x decide the functional effectiveness and the usefulness of the proposed method. Since the function $M(x)$ is unknown to the experimenter, we suppose that to each value x corresponds a random variable (RV) $Y = Y(x)$ with distribution function $Pr[Y(x) \leq y] = H(y|x),$ such that 
\begin{equation*}
    M(x) = \int_{-\infty}^{+\infty} y dH(y|x) ,
\end{equation*} 
is the expected value of $Y$ for the given $x,$ i.e., $ M(x) = \mathbb{E}\{Y_x\}.$
Hence, we can make some successive observations $Y$ at levels $x_1,x_{2},\ldots$ determined sequentially to estimate the desired value of the root $\upkappa.$ We determine then all the needed approximations, starting with the first approximation $X_{1},$ using this recursive formula:
\begin{equation}
    \label{Eq:RM}
    X_{n+1} = X_n-a_n(Y_{X_n}-\alpha),
\end{equation}
where $X_{1}$ a real constant and  $a_{n\{n \in {\mathbb{N}}\}}$ a sequence of real constants satisfying the following conditions: $a_n > 0,\, a_n \to 0,$ and $\sum^{\infty}_{n=0}a_n = \infty$ and $\sum^{\infty}_{n=0}a^2_n < \infty.$
\section{Problem Formulation and Preliminaries}
\label{Sec.III}

In this section, we present the adopted identification test model and establish some preliminaries and definitions required to understand such test.

\subsection{Problem Statement}

In several engineering applications it is typical that an experimenter does not know the exact mathematical model. Instead, S/he observes only a set of random samples from the hypothetical model $M(x).$ We assume that such model is the expected value of such random observations. More accurately, let define the model $M: \mathbb{R} \rightarrow \mathbb{R}$ by $M(x) \triangleq \mathbb{E}(Y_x)$ where $\{Y_x|x \in \mathbb{R}\}$ is the set of RVs. We address an identification-focused test setup, for which the objective of the test is defined as follows: Identifying the roots of a \emph{stochastic} given function, that is, given a finite interval $[a,b],$ determining whether or not a root of the function is in such an interval. To accomplish this task, we assume that realizations of certain RVs are provided, namely, a set of RVs $\{Y_x|x \in \mathbb{R}\}$ is given. Let assume that each observation is denoted by $Y^\upkappa_{x}$ and is equi-distributed on the interval $[x-\upkappa-\delta, x-\upkappa+\delta] ,$ where $\upkappa$ is an unknown parameter and $\delta > 0.$ For a given value of $\upkappa,$ the RV $Y^\upkappa_{x}$ and $Y^\upkappa_{x_{1}}-x_1+x_0$ have the same distribution, thus it is irrelevant which $Y_x$ we consider for the proof analysis. To this end we consider $Y^\upkappa \triangleq Y^\upkappa_{0}.$ Moreover, let $\fY_1^n \triangleq (Y_1,\ldots,Y_n)$ denotes the vector of observation. We assume in the following that $\{Y_i\}_{i=1}^n$ are a class of independent RVs where they have the same distribution as of $Y^\upkappa,$ i.e. $\{Y_i\}_{i=1}^n \overset{\text{\tiny IID}}{\sim} Y^\upkappa.$

\begin{definition}[Test with Two Errors]
A $(n,\varepsilon, \lambda_1, \lambda_2)$-test $\Uptheta$ for an interval $[a,b]$ is a measurable mapping $\Uptheta: \mathbb{R}^n \rightarrow \{0,1\}$ such that
\begin{align}
\label{Eq.Test_I_Type_I}
\Pr \left( \Uptheta \circ \fY_1^n = 0 \right) & \leq \lambda_1 \qquad \forall \upkappa \in [a,b]
\\
\Pr \left( \Uptheta \circ \fY_1^n = 1 \right) & \leq \lambda_2 \qquad \forall \upkappa \in (-\infty,a-\varepsilon] \cup [b+\varepsilon,\infty)
\label{Eq.Test_I_Type_II}
\end{align}
\end{definition}
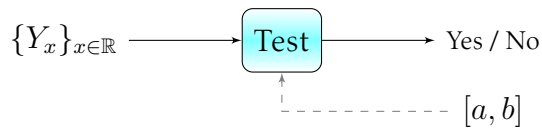
\begin{figure}[H]
    \label{Fig.Test}
    \centering
    \tikzstyle{l} = [draw, -latex']
\tikzstyle{Block1} = [draw, top color = white, middle color = cyan!50, rectangle, rounded corners, minimum height=2em, minimum width=2.5em]
\tikzstyle{input} = [coordinate]
\tikzstyle{sum} = [draw, circle,inner sep=0pt, minimum size=5mm,thick]
\tikzstyle{conv} = [draw, circle,inner sep=0pt, minimum size=5mm,thick]
\tikzstyle{arrow}=[draw,->]
\tikzstyle{small_node} = [draw, circle,inner sep=0pt, minimum size=.8mm,thick]

\begin{tikzpicture}[auto, node distance=2cm,>=latex']

\node[] (M) {$\{Y_{x}\}_{x\in\mathbb{R}}$};
\node[Block1,right=1.5cm of M] (enc) {\text{\fontfamily{jkpl}\selectfont Test}};
\node[right=1.5cm of enc] (Output) {$\text{\fontfamily{jkpl}\selectfont \small Yes\,/\,No}$};
\node[below=3mm of Output] (Target) {$[a,b]$};
\draw[->] (M) -- (enc);
\draw[->] (enc) -- (Output);
\draw[dashed,<-,gray] (enc) |- (Target);

\end{tikzpicture}
    \vspace{-3mm}
    \caption{Root identification setting for a stochastic given function. Given the noisy observations of an unknown function $M(x),$ a test should reliably determine whether or not a root of the function $M(x),$ belongs to the interval $[a,b].$}
\end{figure}
\begin{remark}
In the following we draw some connections between the identification of messages and identification of roots of a function.
\begin{itemize}
    \item Observe that the statistical test $\Uptheta$ might be understood and interpreted as of an identification function with an algorithmic behavior from the set of observations to a binary set of $\{0,1\},$ i.e., the test finally declares either a Yes or No in a reliable manner.
    \item Furthermore, Observe that \eqref{Eq.Test_I_Type_I} and \eqref{Eq.Test_I_Type_II} resemble the conventional notions of the type I\,/\,II error probabilities, respectively, as observed in the context of deterministic and randomized identification via channels \cite{Salariseddigh_PhD_Diss,AD89}. In particular, \eqref{Eq.Test_I_Type_I} describes the stochastic event that the test $\Uptheta$ misses to identify the root of the function, and \eqref{Eq.Test_I_Type_II} represents the event of declaring that the root belongs in $[a,b]$ while the root is not in $[a,b].$
    \item We can interpret the identification of messages as an implicit test, which represent a measurable mapping $\Uptheta : \mathbb{R}^n \to \{0,1\}.$ This test $\Uptheta$ is accomplished for every pair of messages. That is, for every pair, the test verifies whether the output of channels belongs to the target decoder (decoder associated with the selected message in the receiver for identification).
    \end{itemize}
\end{remark}

\begin{definition}[Test with Three Errors]
\label{Def.Test_Three_Errors}
An $(n,\epsilon_1, \epsilon_2, \lambda_1, \lambda_2, \lambda_3)$-test $\Upphi$ for $[a,b]$ is a measurable mapping $\Upphi: \mathbb{R}^n \rightarrow \{0,1\}$ which fulfill the following error conditions
\begin{align}
    \Pr(\Upphi \circ (Y_1,\ldots,Y_n)=1) & > 1- \lambda_1 \, \qquad \forall \theta \in [a,b]
    \\
    \Pr(\Upphi \circ (Y_1,\ldots,Y_n)=0) & > 1- \lambda_2 \, \qquad \forall \theta \not\in (a-\epsilon_2,b+\epsilon_2)
    \\
    \Pr(\Upphi \circ (Y_1,\ldots,Y_n)=1) & > 1- \lambda_3 \, \qquad \forall \theta \in [a-\epsilon_2+\epsilon_1,a-\epsilon_1] \cup [b+\epsilon_1,b+\epsilon_2+\epsilon_1]
\end{align}
where $\{Y_i\}_{i=1}^n$ are independent RVs with the same distribution as $Y^\theta.$
\end{definition}

\begin{definition}[Measure]
    Let $\mathbbmss{X}$ be a set and let $\Sigma$ denotes its $\sigma$-algebra, i.e., nonempty collection of subsets of $\mathbbmss{X}$ closed under complement, countable unions, and countable intersections. A function $m: \Sigma \to \mathbb{R}$ is a measure if the following holds:
    \begin{itemize}
        \item $\forall E \in \Sigma$ the measure's realization is non-negative, i.e., $m (E) \geq 0.$  
        \item Measure of empty set is zero, i.e., $m(\emptyset) = 0.$
        \item For every countable collections $\{E_k\}_{k=1}^{\infty}$ of disjoint sets in $\Sigma,$ $m\left( \bigcup_{k=1}^{\infty} \right) = \sum_{k=1}^{\infty} m(E_k).$
    \end{itemize}
\end{definition}

\begin{theorem}[Intermediate Value Theorem for Continuous Functions; {\cite[Sec.~3.10]{Apostol67}}]
    Consider an interval $I=[a,b]$ in the real numbers $\mathbb{R}$ and a continuous function $ f\colon I\to \mathbb{R} .$ Then if u is a number between f(a) and f(b), that is, $\min(f(a),f(b))<u<\max(f(a),f(b)),$ then there is a $c\in (a,b)$ such that $f(c)=u.$
\end{theorem}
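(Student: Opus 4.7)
The plan is to give the classical Bolzano-style proof based on the least upper bound property of $\mathbb{R}$ together with the continuity of $f$. Without loss of generality I will assume $f(a) < u < f(b)$; the opposite ordering $f(b) < u < f(a)$ follows immediately by replacing $f$ with $-f$ (and $u$ with $-u$), so a single case suffices.

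First I would introduce the set
\begin{equation*}
    S \triangleq \{ x \in [a,b] : f(x) \leq u \} .
\end{equation*}
This set is nonempty because $a \in S$ (since $f(a) < u$) and it is bounded above by $b$, so by the completeness of $\mathbb{R}$ the supremum $c \triangleq \sup S$ exists and lies in $[a,b]$. The whole proof then reduces to showing that $f(c)=u$, which I would do by ruling out the two strict inequalities using an $\varepsilon$-$\delta$ application of continuity.

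Next I would argue by contradiction in both directions. Suppose $f(c) < u$; then $c < b$ (since $f(b) > u$ forces $b \notin S$ and $c$ cannot equal $b$) and continuity of $f$ at $c$ with $\varepsilon = u - f(c) > 0$ yields a $\delta > 0$ such that $f(x) < u$ for every $x \in [c, c+\delta) \cap [a,b]$. But then some $x > c$ belongs to $S$, contradicting $c = \sup S$. Conversely, suppose $f(c) > u$; then $c > a$, and continuity at $c$ with $\varepsilon = f(c) - u > 0$ produces a $\delta > 0$ such that $f(x) > u$ on $(c-\delta, c] \cap [a,b]$. Hence no point of $S$ lies in $(c-\delta, c]$, which means $c-\delta$ is an upper bound for $S$, again contradicting $c = \sup S$. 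Therefore $f(c) = u$, and since $f(a) \ne u \ne f(b)$ we must have $c \in (a,b)$, as required.

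The main obstacle, such as it is, is the correct bookkeeping in the two continuity arguments, particularly the choice of the one-sided neighborhood in each case (right-sided when $f(c) < u$ to produce a new point of $S$ beyond $c$, left-sided when $f(c) > u$ to produce a smaller upper bound). Everything else is standard: existence of the supremum is just the completeness axiom, and the reduction to the case $f(a) < u < f(b)$ is cosmetic. I do not anticipate any genuine difficulty beyond ensuring the endpoint cases $c = a$ and $c = b$ are properly excluded so that the conclusion $c \in (a,b)$, rather than merely $c \in [a,b]$, is obtained.
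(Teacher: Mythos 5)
Your proof is correct and is the standard Bolzano-style supremum argument; the paper does not prove this statement itself but merely quotes it as a background result from Apostol, whose own proof (reducing to the sign-change case and locating the root as $\sup\{x : f(x)\leq u\}$) is essentially the argument you give. Both of your contradiction cases and the exclusion of the endpoints $c=a$ and $c=b$ are handled properly, so there is nothing to fix.
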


\subsection{Previous Results - Identification Test With Two Errors}
In the following, we first present previous results for identification of roots where only two type of errors are considered. Such test consider only one pair of gap around the target interval. 

\begin{theorem}[see {\cite{KleinewaechterISIT97}}]
        \label{Th.Test_Two_Errors}
        If $\Uptheta$ is an $(n, \varepsilon, \lambda_1, \lambda_2)$-test where $\varepsilon < 2\delta.$ Now, let define $\lambda \triangleq \max\{\lambda_1,\lambda_2\}.$ Then, the number of observations $n$ is lower bounded as follows
        \begin{equation}
        \label{Eq.LBFormula}
            n \geq \frac {\log 2\lambda}{\log (1- \frac{\varepsilon}{2\delta})} .
        \end{equation}
        \end{theorem}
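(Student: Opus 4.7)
The plan is to carry out a two-hypothesis (binary testing) argument by picking a boundary pair of values of $\upkappa$ — one just inside $[a,b]$ and one just outside $(a-\varepsilon,b+\varepsilon)$ — and exhibiting a single ``ambiguous'' event under which the test cannot distinguish the two hypotheses. Concretely I would take $\upkappa_1 = a \in [a,b]$ and $\upkappa_2 = a-\varepsilon$; these are separated by exactly $\varepsilon$, and by the hypothesis $\varepsilon < 2\delta$ their induced laws for $Y^\upkappa$, namely the uniform distributions on $[-\upkappa_1-\delta,-\upkappa_1+\delta]$ and $[-\upkappa_2-\delta,-\upkappa_2+\delta]$, have overlapping supports of length $2\delta - \varepsilon$.

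Next, I would introduce the event $E \subset \mathbb{R}^n$ that every coordinate $Y_i$ of the observation vector $\fY_1^n$ lies in the intersection of the two supports. Since each $Y_i$ is uniform on an interval of length $2\delta$, and the intersection has length $2\delta - \varepsilon$, independence across $i$ gives
\begin{equation*}
    \Pr_{\upkappa_1}(E) \;=\; \Pr_{\upkappa_2}(E) \;=\; \left(1 - \tfrac{\varepsilon}{2\delta}\right)^{n} .
\end{equation*}
The key observation is that \emph{conditionally on $E$}, the joint laws of $\fY_1^n$ under $\upkappa_1$ and under $\upkappa_2$ coincide (both are the product of uniforms on the common intersection). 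Hence for the measurable set $A = \{\Uptheta \circ \fY_1^n = 1\}$ we have $\Pr_{\upkappa_1}(A\mid E) = \Pr_{\upkappa_2}(A\mid E)$.

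Now I would combine this with the two error constraints from Definition~1: at $\upkappa_1 \in [a,b]$ the miss probability gives $\Pr_{\upkappa_1}(A) \geq 1-\lambda_1$, and at $\upkappa_2 \in (-\infty,a-\varepsilon]$ the false-alarm constraint gives $\Pr_{\upkappa_2}(A) \leq \lambda_2$. Expanding both probabilities by conditioning on $E$ and $E^c$, subtracting, and using the conditional-equality observation yields
\begin{equation*}
    1-\lambda_1-\lambda_2 \;\leq\; \Pr_{\upkappa_1}(A) - \Pr_{\upkappa_2}(A) \;=\; \big(\Pr_{\upkappa_1}(A\mid E^c) - \Pr_{\upkappa_2}(A\mid E^c)\big)\big(1 - \Pr(E)\big) \;\leq\; 1 - \left(1-\tfrac{\varepsilon}{2\delta}\right)^n .
\end{equation*}
Rearranging and using $\lambda = \max\{\lambda_1,\lambda_2\}$ gives $(1-\varepsilon/(2\delta))^n \leq 2\lambda$. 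Taking logarithms, and noting that both $\log(2\lambda)$ and $\log(1-\varepsilon/(2\delta))$ are negative so that dividing flips the inequality, delivers the claimed bound.

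The only nontrivial step is the conditional indistinguishability on $E$; everything else is bookkeeping. The main care I would take is to argue cleanly that, once we condition on each coordinate lying in the intersection, the conditional density is the indicator of the intersection normalized by its length — and this is the same function for both $\upkappa_1$ and $\upkappa_2$ — so that the conditional distribution of $\Uptheta \circ \fY_1^n$ is identical under both hypotheses. With that in hand, the rest is a one-line total-variation style estimate.
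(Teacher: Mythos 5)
Your argument is correct, and it is essentially the same overlap argument that underlies this theorem and that the paper spells out (in Lebesgue-measure language) when proving its three-error generalization: your event $E$ is exactly the intersection cube $J\cap N$ of measure $(2\delta-\varepsilon)^n$, and your conditional-indistinguishability step is the observation that both uniform laws have the same density $(2\delta)^{-n}$ there, so the test must "spend" at least $1-\lambda_1-\lambda_2$ of probability outside that cube. The only cosmetic difference is that you phrase the conclusion as a total-variation bound via conditioning on $E$ and $E^c$, whereas the paper adds the lower bounds on $m(\Uptheta^{-1}(1)\cap J\cap N)$ and $m(\Uptheta^{-1}(0)\cap J\cap N)$ and compares with $m(J\cap N)$; both routes yield $(1-\tfrac{\varepsilon}{2\delta})^n\le\lambda_1+\lambda_2\le 2\lambda$ and hence the stated bound.
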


\subsection{Lower Bound Analysis - Identification Test With Two Errors}
We first began by studying the gap between the single and the double logarithmic rate. Further, we tried to examine the shape of the curve of this bound of identification of roots while fixing some parameters and then plotting it in function of the remaining parameter.

\begin{figure}[!htb]
    \centering
    \includegraphics[width=0.7\textwidth]{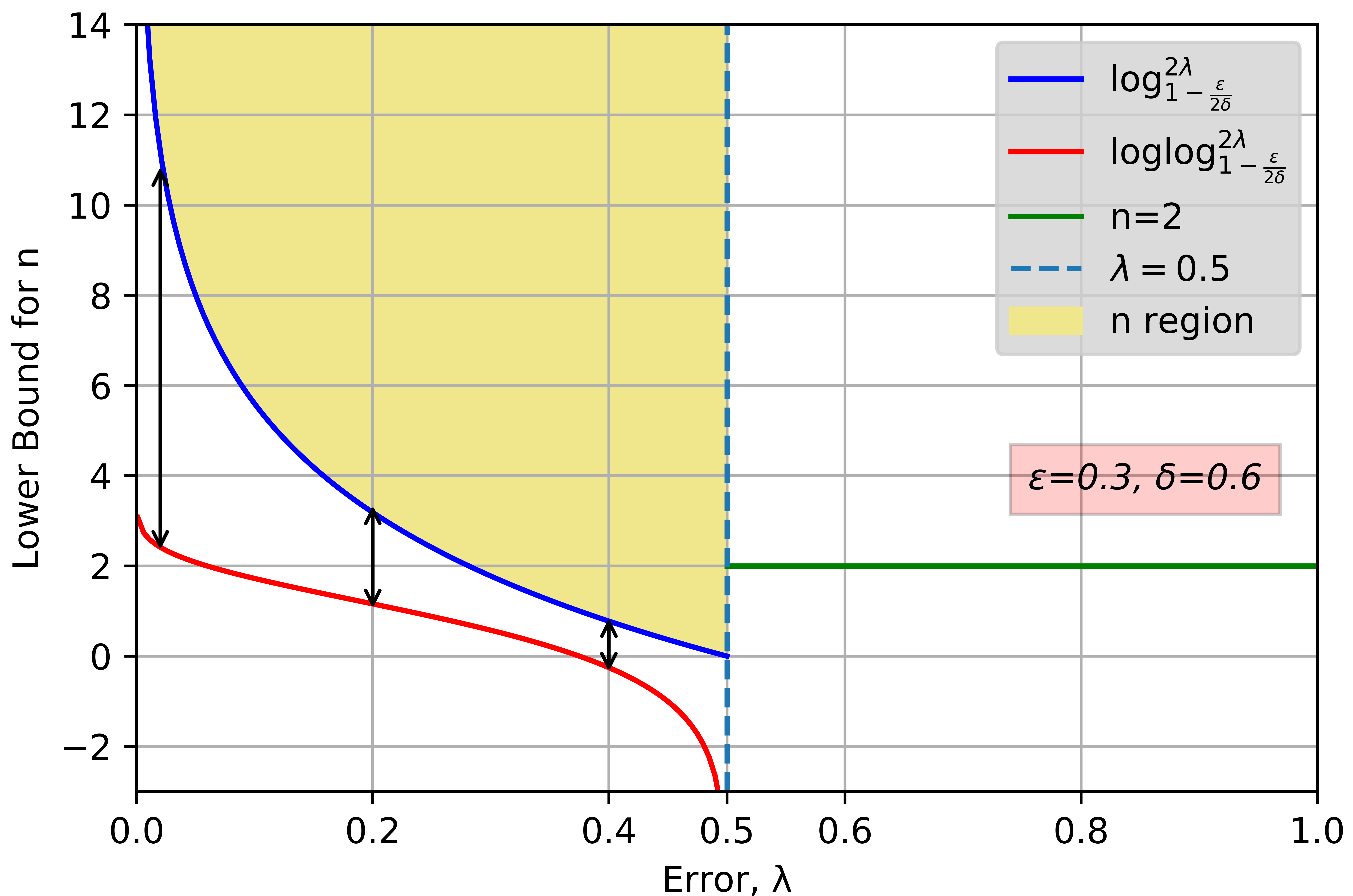}
    \caption{Illustration of first and second order logarithmic lower bounds on the number of observations. The gap between the two curves increases for small values of errors.}
    \label{fig.svsdlog}
\end{figure}

The blue curve in Figure~\ref{fig.svsdlog} corresponds to the lower bound given in \eqref{Eq.LBFormula}. We can observe that for asymptotically small values of $\lambda$ the required number of observation increases arbitrarily large and its gap to the well known double logarithmic rate as observed in the message identification problem \cite{AD89}, marked in red color, monotonically increases. In principle, the gap between the two curves becomes even larger for certain ratio of $\frac{\epsilon}{2\delta}.$ However, in the middle regimes where $\lambda$ is between $0.2$ and $0.4$ we need relatively less number of observations. For example for $\lambda = 0.2$ we need only three observations. On the other hand, for $0.5 < \lambda < 1,$ the root identification test needs evidently only two observations \cite{KleinewaechterISIT97}.

\section{Main Results - Lower Bound on the Number of Observations}
\label{Sec.IV}
In this section we introduce a new test with two pair of gaps in the neighbourhood of a target interval which leads a more general lower bound. Our main theorem is stated as follows.

\begin{theorem}
\label{Th.Test_Three_Errors}
    Let $\Upphi$ be an $(n,\epsilon_1, \epsilon_2, \lambda_1, \lambda_2, \lambda_3)$-test with $0 < 2\epsilon_1 < \epsilon_2 < 2\delta.$ Now, let define $\lambda \triangleq \max\{\lambda_1,\lambda_2,\lambda_3\}.$ Then the following inequality between the test parameters holds
    \begin{equation}
        \label{Ineq.LB_New_Test}
        6 \lambda \geq \Big[ 2(1-\frac{\epsilon_1}{2\delta})^n + (1-\frac{\epsilon_2}{2\delta})^n \Big] .
    \end{equation}
\end{theorem}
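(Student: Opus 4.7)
The plan is to generalize the two-error argument underlying Theorem~\ref{Th.Test_Two_Errors} by running the same coupling-on-overlap calculation three times on three carefully chosen pairs of parameter values $(\theta,\theta')$, and then adding the three resulting inequalities. The single reusable lemma I would extract first is the following: whenever $\theta$ is forced to output $1$ with error at most $\lambda_A$, $\theta'$ is forced to output $0$ with error at most $\lambda_B$, and $d=|\theta-\theta'|<2\delta$, then $(1-d/(2\delta))^{n}\leq \lambda_A+\lambda_B$. To prove this I would note that the uniform laws of $Y^\theta$ and $Y^{\theta'}$ share a common subinterval of length $2\delta-d$; letting $E$ be the event that all $n$ observations land in this common subinterval gives $\Pr_\theta(E)=\Pr_{\theta'}(E)=(1-d/(2\delta))^n$ and makes the two $n$-fold product laws conditioned on $E$ identical, so $\Pr_\theta(\Upphi=1)$ and $\Pr_{\theta'}(\Upphi=1)$ can differ by at most $1-\Pr(E)$; plugging in the two error hypotheses then yields the claimed inequality.

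The next step is to instantiate the lemma at three pairs whose distances exactly match the three terms on the right-hand side of \eqref{Ineq.LB_New_Test}. A convenient choice using only the assumption $0<2\epsilon_1<\epsilon_2<2\delta$ is $(\theta_1,\theta_1')=(a-\epsilon_2,\,a-\epsilon_2+\epsilon_1)$, $(\theta_2,\theta_2')=(b+\epsilon_2,\,b+\epsilon_2+\epsilon_1)$, and $(\theta_3,\theta_3')=(a-\epsilon_2,\,a)$, with pairwise distances $\epsilon_1,\epsilon_1,\epsilon_2$ respectively. In each pair the first coordinate lies outside $(a-\epsilon_2,b+\epsilon_2)$ and is therefore controlled by $\lambda_2$; the second coordinates of the first two pairs sit in the ``third region'' $[a-\epsilon_2+\epsilon_1,a-\epsilon_1]\cup[b+\epsilon_1,b+\epsilon_2+\epsilon_1]$ and are controlled by $\lambda_3$, while the second coordinate of the third pair sits in $[a,b]$ and is controlled by $\lambda_1$. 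The hypothesis $2\epsilon_1<\epsilon_2$ is exactly what places $a-\epsilon_2+\epsilon_1$ (and $b+\epsilon_1$) inside the third region, and $\epsilon_1<\epsilon_2<2\delta$ keeps every overlap nontrivial.

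Applying the lemma to the three pairs then yields $(1-\epsilon_1/(2\delta))^n\leq \lambda_2+\lambda_3$ (twice) and $(1-\epsilon_2/(2\delta))^n\leq \lambda_1+\lambda_2$. Summing these and bounding each $\lambda_i$ by $\lambda$ produces
\[
2\Big(1-\tfrac{\epsilon_1}{2\delta}\Big)^{n}+\Big(1-\tfrac{\epsilon_2}{2\delta}\Big)^{n}\leq 2(\lambda_2+\lambda_3)+(\lambda_1+\lambda_2)\leq 6\lambda,
\]
which is precisely \eqref{Ineq.LB_New_Test}.

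The only genuinely technical step is the coupling/conditioning argument in the lemma: one must verify that the product distributions of $(Y_1,\ldots,Y_n)$ under $\theta$ and under $\theta'$ agree on the event $E$, and then convert this equality into a bound on the difference of the two acceptance probabilities. This rests on the i.i.d.\ structure of $\{Y_i\}_{i=1}^{n}$ together with the observation that $Y^\theta$ and $Y^{\theta'}$ are shifted uniforms of equal width, whose restrictions to the overlap are the \emph{identical} conditional law. Everything else is bookkeeping---selecting three pairs whose distances equal $\epsilon_1,\epsilon_1,\epsilon_2$ and each of which contrasts an output-$1$ parameter with an output-$0$ parameter.
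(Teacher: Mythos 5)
Your argument is correct and shares the paper's core strategy: run an ``overlapping uniform supports'' argument on three pairs of parameter values at distances $\epsilon_1,\epsilon_1,\epsilon_2$, and add the three resulting inequalities. The two genuine differences are worth noting. First, you package the per-pair step as a coupling/conditioning lemma yielding $(1-d/(2\delta))^n\le\lambda_A+\lambda_B$, whereas the paper works with the Lebesgue measure of $\Upphi^{-1}(1)$ and $\Upphi^{-1}(0)$ intersected with the cubes $[c-\delta,c+\delta]^n$ and bounds each pair by $2\lambda$ outright; these are equivalent, but your version retains the individual $\lambda_i$ and is cleaner to reuse. Second, and more substantively, your choice of pairs is the one actually licensed by Definition~\ref{Def.Test_Three_Errors}: the paper's first pair is $(a,\,a-\epsilon_1)$ and asserts that the test outputs $0$ with high probability at $\theta=a-\epsilon_1$, but $a-\epsilon_1$ lies in $[a-\epsilon_2+\epsilon_1,a-\epsilon_1]$ and inside $(a-\epsilon_2,b+\epsilon_2)$, so the definition only forces output $1$ there (via $\lambda_3$), and that pair does not contrast an output-$1$ parameter with an output-$0$ parameter. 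Your pairs $(a-\epsilon_2,\,a-\epsilon_2+\epsilon_1)$, $(b+\epsilon_2,\,b+\epsilon_2+\epsilon_1)$ and $(a-\epsilon_2,\,a)$ each genuinely do (the condition $2\epsilon_1<\epsilon_2$ entering exactly where you say), and the sum $2(\lambda_2+\lambda_3)+(\lambda_1+\lambda_2)\le 6\lambda$ gives precisely \eqref{Ineq.LB_New_Test}. One cosmetic remark: $b+\epsilon_2+\epsilon_1$ also lies in the $\lambda_2$-region $[b+\epsilon_2,\infty)$, where the definition simultaneously demands output $0$; invoking only the $\lambda_3$ condition there is legitimate, but the pair $(b+\epsilon_2,\,b+\epsilon_2-\epsilon_1)$, valid since $\epsilon_2-\epsilon_1\ge\epsilon_1$, would avoid that contradictory strip altogether.
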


\begin{proof}
    The proof of lower bound given in Theorem~\ref{Th.Test_Three_Errors} is provided as follows. Observe that for all $c \in [a,b]$ we have
    \begin{equation}
        \label{Eq.NT1}
         m (\Upphi^{-1}(1) \cap [c-\delta,c+\delta]^n) > (1- \lambda_1) \cdot (2\delta)^n ,
    \end{equation}
and for all $c \notin (a-\epsilon_2,b+\epsilon_2)$ we have
\begin{equation}
    \label{Eq.NT2}
    m (\Upphi^{-1}(0) \cap [c-\delta,c+\delta]^n) > (1- \lambda_2)\cdot(2\delta)^n .
\end{equation}
Finally, for all $c \in [a-\epsilon_2+\epsilon_1,a-\epsilon_1] \cup [b+\epsilon_1,b+\epsilon_2+\epsilon_1]$ it holds
\begin{equation}
    \label{Eq.NT3}
    m (\Upphi^{-1}(1) \cap [c-\delta,c+\delta]^n) > (1- \lambda_3)\cdot(2\delta)^n .
\end{equation}
Observe that setting $\theta = c$ and exploiting \eqref{Eq.NT1} we obtain
    \begin{align*}
   \Pr(\Upphi \circ (Y_1,\ldots,Y_n)=1) > 1- \lambda_1 .
    \end{align*}
    Observe that since $Y_i$ are independent equi-distributed on $[c-\delta,c+\delta],$ therefore $(Y_1, \dots, Y_n)$ is equi-distributed on $[c-\delta,c+\delta]^n,$ i.e., we have
    \begin{align*}
    \Pr(\Upphi \circ (Y_1,\ldots,Y_n)=1 \land (Y_1,\ldots,Y_n) \in [c-\delta,c+\delta]^n) > 1- \lambda_1 ,
    \end{align*}
    which implies
    \begin{align*}
       \Pr(\Upphi \circ (Y_1,\ldots,Y_n)=1 \land (Y_1,\ldots,Y_n) \in [c-\delta,c+\delta]^n) \Rightarrow \frac{m (\Upphi^{-1}(1) \cap [c-\delta,c+\delta]^n)}{m([c-\delta,c+\delta]^n)} .
    \end{align*}
    Therefore,
    \begin{align*}
        m (\Upphi^{-1}(1) \cap [c-\delta,c+\delta]^n) > (1- \lambda_1)\cdot m([c-\delta,c+\delta]^n) = (1- \lambda_1)\cdot(2\delta)^n .
    \end{align*}
    The two inequalities \ref{Eq.NT2} and \ref{Eq.NT3} can be proved similarly.
Next, we distinguish between six cases. Let start with the first two cases $\theta = a$ and $\theta = a-\epsilon_1.$ Let $\Upphi$ be an $(n,\epsilon_1, \epsilon_2, \lambda_1, \lambda_2, \lambda_3)$-test for $\theta \in [a,b].$ Let
\begin{align*}
    J_1 = [a-\delta,a+\delta]^n \quad \text{ and } \quad N_1 = [a-\epsilon_1-\delta,a-\epsilon_1+\delta]^n .
\end{align*}
Exploiting \eqref{Eq.NT1} we obtain
\begin{align*}
    m (\Upphi^{-1}(1) \cap J_1) > (1- \lambda_1)\cdot(2\delta)^n \quad \text{ and } \quad m (\Upphi^{-1}(0) \cap N_1) > (1- \lambda_1)\cdot(2\delta)^n .
\end{align*}
On the other hand we have
\begin{align*}
    m (\Upphi^{-1}(1) \cap J_1\cap N_1) \leq m (J_1 \cap N_1) \leq (2\delta - \epsilon_1)^n .   
\end{align*}
Therefore, since $\lambda_1 \leq \lambda$ we obtain
\begin{align*}
    (1- \lambda)\cdot(2\delta)^n &< m (\Upphi^{-1}(1) \cap J_1) \\
    &= m (\Upphi^{-1}(1) \cap J_1\cap N_1) + m (\Upphi^{-1}(1) \cap (J_1 \setminus N_1)) \\
    &< m (\Upphi^{-1}(1) \cap J_1\cap N_1) + m (J_1\setminus N_1) \\
    &= m (\Upphi^{-1}(1) \cap J_1\cap N_1) + (2\delta)^n - (2\delta - \epsilon_1)^n .
\end{align*}
Thereby,
\begin{align*}
    m (\Upphi^{-1}(1) \cap J_1\cap N_1) & \geq (1- \lambda) (2\delta)^n - (2\delta)^n + (2\delta - \epsilon_1)^n = (2\delta - \epsilon_1)^n - \lambda (2\delta)^n .
\end{align*}
Similarly, we have
\begin{align*}
    m (\Upphi^{-1}(0) \cap J_1\cap N_1) & \geq (1- \lambda) (2\delta)^n - (2\delta)^n + (2\delta - \epsilon_1)^n \\
    & = (2\delta - \epsilon_1)^n - \lambda (2\delta)^n .
\end{align*}
Now we note that the test also has to distinguish between $a-\epsilon_2+\epsilon_1$ and $a-\epsilon_2.$ As the two intervals have the same size $\epsilon_1$ we can similarly obtain
\begin{align*}
    J_2 = [a-\epsilon_2+\epsilon_1-\delta,a-\epsilon_2+\epsilon_1+\delta]^n \quad \text{ and } \quad N_2 = [a-\epsilon_2-\delta,a-\epsilon_2+\delta]^n .
\end{align*}
Therefore, since $\lambda_2 \leq \lambda$ we obtain
\begin{align*}
    m (\Upphi^{-1}(1) \cap J_2\cap N_2) & \geq (2\delta - \epsilon_1)^n - \lambda (2\delta)^n .
\end{align*}
Similarly, we have
\begin{align*}
    m (\Upphi^{-1}(0) \cap J_2\cap N_2) &\geq (2\delta - \epsilon_1)^n - \lambda (2\delta)^n .
\end{align*}
For the last two cases the test $\Upphi$ has to distinguish between $a$ and $a-\epsilon_2.$ We get similarly 
\begin{align*}
    J_3 = [a-\delta,a+\delta]^n \quad \text{ and } \quad N_3 = [a-\epsilon_2-\delta,a-\epsilon_2+\delta]^n .
\end{align*}
Therefore, since $\lambda_3 \leq \lambda$ we obtain
\begin{align*}
    \Rightarrow m (\Upphi^{-1}(1) \cap J_3\cap N_3) &\geq (2\delta - \epsilon_1)^n - \lambda (2\delta)^n .
\end{align*}
Next, similarly we obtain
\begin{align*}
    m (\Upphi^{-1}(0) \cap J_3\cap N_3) &\geq (2\delta - \epsilon_1)^n - \lambda (2\delta)^n .
\end{align*}
Thus 
\begin{align*}
    m (\Upphi^{-1}(1) \cap J_1\cap N_1) + m (\Upphi^{-1}(0) \cap J_1\cap N_1)+
    m (\Upphi^{-1}(1) \cap J_2\cap N_2)\\ + m (\Upphi^{-1}(0) \cap J_2\cap N_2) + m (\Upphi^{-1}(1) \cap J_3\cap N_3) + m (\Upphi^{-1}(0) \cap J_3\cap N_3)\\\leq m (J_1\cap N_1) + m (J_2\cap N_2) + m (J_3\cap N_3) = 2(2\delta - \epsilon_1)^n + (2\delta - \epsilon_2)^n .
\end{align*}
Therefore,
\begin{align*}
    2(2\delta - \epsilon_1)^n +  (2\delta - \epsilon_1)^n & \geq 4((2\delta - \epsilon_1)^n - \lambda (2\delta)^n) + 2((2\delta - \epsilon_2)^n - \lambda (2\delta)^n)\\&
    \Leftrightarrow  2(2\delta - \epsilon_1)^n +  (2\delta - \epsilon_1)^n \geq 4(2\delta - \epsilon_1)^n - 6\lambda (2\delta)^n + 2(2\delta - \epsilon_2)^n\\
    &\Leftrightarrow 6\lambda \geq \frac{2(2\delta - \epsilon_1)^n + (2\delta - \epsilon_2)^n}{(2\delta)^n}\\
    & \Leftrightarrow 6\lambda \geq 2\big(1-\frac{\epsilon_1}{2\delta}\big)^n + \big(1-\frac{\epsilon_2}{2\delta}\big)^n \\
    & \Leftrightarrow \log (6\lambda) \geq \log \Big( 2 \big( 1 - \frac{\epsilon_1}{2\delta} \big)^n + \big( 1-\frac{\epsilon_2}{2\delta} \big)^n \Big) .
    \end{align*}
\end{proof}

\begin{corollary}
    For $\epsilon_1 = \epsilon_2$ and $\lambda_1 = \lambda_2 = \lambda_3,$ we recover the previous identification test, i.e., 
\begin{align}
    6\lambda \geq 2 \big(1-\frac{\epsilon_1}{2\delta} \big)^n + \big(1-\frac{\epsilon_2}{2\delta} \big)^n &\xRightarrow{\text{$\epsilon_1 = \epsilon_2$}} 6\lambda \geq 3\big( 1-\frac{\epsilon_1}{2\delta} \big)^n
    \nonumber\\
    & \Leftrightarrow \log (6\lambda) \geq \log \Big(3 \big(1-\frac{\epsilon_1}{2\delta}\big)^n\Big)
    \nonumber\\
    & \Leftrightarrow \frac{\log 2\lambda}{\log \big(1 - \frac{\epsilon}{2\delta} \big)} \leq n .
\end{align}
\end{corollary}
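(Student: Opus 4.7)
The plan is to take the inequality \eqref{Ineq.LB_New_Test} established in Theorem~\ref{Th.Test_Three_Errors} as the starting point and then carry out a direct algebraic specialization, so no new probabilistic or measure-theoretic work is needed. The claim is that under the hypotheses $\epsilon_1=\epsilon_2$ (call the common value $\epsilon$) and $\lambda_1=\lambda_2=\lambda_3=\lambda$, the bound \eqref{Ineq.LB_New_Test} collapses to the previous bound \eqref{Eq.LBFormula} of Theorem~\ref{Th.Test_Two_Errors}. Note that the equality of the $\lambda_i$'s is already absorbed into the definition of $\lambda$ used in Theorem~\ref{Th.Test_Three_Errors}, so the only substantive substitution is $\epsilon_1=\epsilon_2$.

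The first step is to substitute $\epsilon_1=\epsilon_2=\epsilon$ into \eqref{Ineq.LB_New_Test}, which merges the two terms on the right-hand side into $3\bigl(1-\tfrac{\epsilon}{2\delta}\bigr)^n$, yielding $6\lambda \ge 3\bigl(1-\tfrac{\epsilon}{2\delta}\bigr)^n$, i.e., $2\lambda \ge \bigl(1-\tfrac{\epsilon}{2\delta}\bigr)^n$. The second step is to apply the natural logarithm (monotone increasing) to both sides, producing
\begin{equation*}
\log(2\lambda) \;\ge\; n \log\!\Bigl(1-\tfrac{\epsilon}{2\delta}\Bigr).
\end{equation*}
The third step is to solve for $n$. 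Here the hypothesis $\epsilon < 2\delta$ (which is inherited from the constraint $0<2\epsilon_1<\epsilon_2<2\delta$ together with $\epsilon_1=\epsilon_2$ being interpreted as a limiting case) ensures $0<1-\tfrac{\epsilon}{2\delta}<1$, so $\log\!\bigl(1-\tfrac{\epsilon}{2\delta}\bigr)<0$. Dividing both sides by this negative quantity flips the inequality, and one obtains
\begin{equation*}
n \;\ge\; \frac{\log(2\lambda)}{\log\!\bigl(1-\tfrac{\epsilon}{2\delta}\bigr)},
\end{equation*}
which is exactly \eqref{Eq.LBFormula}.

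The only subtlety, and the place to be careful in the write-up, is the sign bookkeeping in the final division: both $\log(2\lambda)$ (for $\lambda<1/2$, the regime of interest) and $\log\!\bigl(1-\tfrac{\epsilon}{2\delta}\bigr)$ are negative, so the ratio is positive and the inequality direction must be reversed once. I would explicitly mention this to avoid ambiguity. Note also that the boundary configuration $\epsilon_1=\epsilon_2$ is not strictly permitted by the strict inequalities in Theorem~\ref{Th.Test_Three_Errors}; in the proposal I would phrase the corollary as a limiting statement as $\epsilon_1\uparrow\epsilon_2$, since the right-hand side of \eqref{Ineq.LB_New_Test} is continuous in $\epsilon_1$, so passing to the limit is immediate. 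No further obstacle is expected.
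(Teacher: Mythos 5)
Your proposal is correct and follows essentially the same route as the paper: substitute $\epsilon_1=\epsilon_2$ into \eqref{Ineq.LB_New_Test} to get $6\lambda\geq 3\bigl(1-\frac{\epsilon}{2\delta}\bigr)^n$, take logarithms, and divide by the negative quantity $\log\bigl(1-\frac{\epsilon}{2\delta}\bigr)$ to flip the inequality and obtain \eqref{Eq.LBFormula}. Your additional observation that $\epsilon_1=\epsilon_2$ violates the strict constraint $2\epsilon_1<\epsilon_2$ of Theorem~\ref{Th.Test_Three_Errors} and should be handled as a continuity/limiting argument is a point the paper silently skips, and is worth keeping.
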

\begin{corollary}
    Observe that similar to the arguments given in the old test the condition here for $n = \infty$ is given by $\epsilon_1, \epsilon_2 \to 0.$ To show this we simplify the lower bound when $\epsilon_1 \to 0,$ i.e,
\begin{align}
    6\lambda \geq 2\big(1-\frac{\epsilon_1}{2\delta}\big)^n + \big(1-\frac{\epsilon_2}{2\delta}\big)^n &\xRightarrow{\text{$\epsilon_1 = 0$}} 6\lambda \geq 2 + \big(1-\frac{\epsilon_2}{2\delta}\big)^n .
\end{align}
Thereby, $\log (6\lambda-2) \geq n\log (1-\frac{\epsilon_2}{2\delta})^n,$ which implies
\begin{align}
    n \geq \frac{\log (6\lambda-2)}{\log (1- \frac{\epsilon_2}{2\delta})} .
\end{align}
Now setting $\epsilon_2 \to 0,$ the number of observations, $n,$ tends to $\infty.$ Furthermore, in order to compare the new test versus the previous one in \eqref{Eq.LBFormula} for $\epsilon_2=\epsilon$ we obtain
 \begin{align}
    \frac{\log(6\lambda-2)}{\log(1-\frac{\epsilon_2}{2\delta})}\leq \frac{\log(2\lambda)}{\log(1-\frac{\epsilon}{2\delta})} \rightarrow 6\lambda -2 \leq 2\lambda \rightarrow \lambda \geq \frac{1}{2} .
\end{align}
This implies that for the range of $\lambda \geq \frac{1}{2}$ the test with 3 errors gives an improved lower bound in terms of the required number of observation. However, in order that the inequality $\log(6\lambda-2)/  \log(1-\frac{\epsilon_2}{2\delta}) > 0$ holds, the term $\log(6\lambda-2)$ should be negative because $\log (1-\frac{\epsilon}{2\delta}) < 0.$
\begin{align*}
    \log (6\lambda-2) < 0 \Rightarrow 0 < 6\lambda-2 < 1 \Leftrightarrow \lambda < \frac{1}{2} .
\end{align*}
Moreover, another condition on $\lambda$ is imposed, namely, if $6\lambda-2 \leq 0$ then $\log (6\lambda-2) \to \infty,$ so this logarithm is not defined, that is why $6\lambda-2 > 0$ should hold  which implies $\lambda > \frac{1}{3}.$ Therefore, if $\epsilon_1 = 0$ the new test does not provide advantage.
\end{corollary}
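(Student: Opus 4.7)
The plan is to take the main inequality \eqref{Ineq.LB_New_Test} of Theorem~\ref{Th.Test_Three_Errors} as the starting point and carry out purely algebraic manipulations, followed by an inspection of when each side of the resulting expression is well defined and when it improves on \eqref{Eq.LBFormula}. First I would plug $\epsilon_1 = 0$ into \eqref{Ineq.LB_New_Test}. The term $2(1 - \epsilon_1/(2\delta))^n$ collapses to $2$, leaving $6\lambda \geq 2 + (1 - \epsilon_2/(2\delta))^n$. Isolating the exponential and applying the monotone logarithm yields $n\,\log(1 - \epsilon_2/(2\delta)) \leq \log(6\lambda - 2)$, and since $0 < \epsilon_2 < 2\delta$ forces $\log(1 - \epsilon_2/(2\delta)) < 0$, dividing flips the direction to produce the simplified lower bound $n \geq \log(6\lambda - 2)/\log(1 - \epsilon_2/(2\delta))$.

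Next I would analyze the limit $\epsilon_2 \to 0$. Here the denominator $\log(1 - \epsilon_2/(2\delta))$ tends to $0^{-}$, while the numerator $\log(6\lambda - 2)$ stays bounded and, in the range of $\lambda$ for which the bound is well defined, negative; so the quotient diverges to $+\infty$. This recovers the stated claim that $\epsilon_1 = \epsilon_2 = 0$ forces $n \to \infty$, mirroring the behavior of the two-error test discussed in Theorem~\ref{Th.Test_Two_Errors}.

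Then I would perform the head-to-head comparison with \eqref{Eq.LBFormula} by setting $\epsilon_2 = \epsilon$ so that the two bounds share a common denominator. The comparison between $\log(6\lambda - 2)/\log(1 - \epsilon/(2\delta))$ and $\log(2\lambda)/\log(1 - \epsilon/(2\delta))$ reduces, after cancelling the shared negative denominator and passing back through the logarithm, to the linear inequality $6\lambda - 2 \leq 2\lambda$. Solving this identifies $\lambda \geq 1/2$ as the regime in which the three-error bound would in principle dominate the two-error bound.

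Finally I would pin down the meaningful domain of $\lambda$ for the simplified lower bound itself. For $\log(6\lambda - 2)/\log(1 - \epsilon_2/(2\delta))$ to yield a positive right-hand side, numerator and denominator must agree in sign; since the denominator is negative, this requires $\log(6\lambda - 2) < 0$, i.e., $6\lambda - 2 < 1$, hence $\lambda < 1/2$. Independently, the logarithm is only defined when its argument is strictly positive, so $6\lambda - 2 > 0$ imposes $\lambda > 1/3$. The meaningful range $1/3 < \lambda < 1/2$ is therefore disjoint from the improvement regime $\lambda \geq 1/2$ isolated in the previous step, and I would conclude that the degenerate case $\epsilon_1 = 0$ offers no genuine advantage over the test of Theorem~\ref{Th.Test_Two_Errors}. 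The main obstacle I anticipate is the careful bookkeeping of inequality directions when dividing by $\log(1 - \epsilon_2/(2\delta)) < 0$, together with assembling the piecewise sign-and-definedness constraints on $\lambda$ into a single clean no-advantage conclusion rather than a collection of disjoint algebraic steps.
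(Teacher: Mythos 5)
Your proposal is correct and follows essentially the same route as the paper's own derivation: substitute $\epsilon_1 = 0$ into \eqref{Ineq.LB_New_Test}, solve for $n$ by dividing by the negative quantity $\log\big(1-\tfrac{\epsilon_2}{2\delta}\big)$, observe the divergence as $\epsilon_2 \to 0$, compare against \eqref{Eq.LBFormula} at $\epsilon_2=\epsilon$ to isolate $\lambda \geq \tfrac{1}{2}$, and intersect this with the definedness and positivity constraints $\tfrac{1}{3} < \lambda < \tfrac{1}{2}$ to conclude that the degenerate case offers no advantage. One small caution (present in the paper's version as well): cancelling the shared negative denominator must flip the inequality, so the intermediate comparison should read $6\lambda - 2 \geq 2\lambda$ rather than $6\lambda - 2 \leq 2\lambda$; the stated conclusion $\lambda \geq \tfrac{1}{2}$ is the one that follows from the corrected direction.
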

\begin{remark}[Complexity of the root identification]
As Figure~\ref{fig:Motiv} shows in contrast to identification of messages where a double logarithmic rate is possible, our finding in this works confirms that the problem of identification of roots of a function does not admit such fast rate. In fact, the root identification is a logarithmic problem and much slower compared to the message identification problem. However, the root identification is faster than finding the root of a function. For example, the complexity of finding the root for example in the case Kiefer-Wolfowitz algorithm \cite{Kiefer52} is $\mathcal{O}(n^{-1})$ but the complexity for identifying the root is logarithmic.

\begin{figure}[H]
    \centering
    \begin{tikzpicture}[x=1cm,y=0.5cm]
        \draw[-stealth] (-2,0)--(10,0) node [right]{n}; 
        \draw[dashed] (-1.5,-2.5)--(-1.5,2.5) node (a) [right]{$\sim \log\log N(n,\lambda)$ };
        \node[below = 1.6cm of a]{\hspace{1cm} \small message identification};
        \draw[dashed] (3.5,-2.5)--(3.5,2.5) node (b) [right]{$\log_{1-\frac{\epsilon}{2\delta}}^{2\lambda}$ };
        \node[below =1.5cm of b]{\hspace{1cm} \small root identification};
        \draw[dashed] (6.5,-2.5)--(6.5,2.5) node (c) [right]{$\mathcal{O}(n^c)$ };
        \node[below =1.6cm of c]{\hspace{1.5cm} \small root computation};
        \draw [very thick, mycolor9] (3.5,0)--(4,1) ;
        \draw [very thick, mycolor9] (3.5,-1)--(4.5,1) ;
        \draw [very thick, mycolor9] (4,-1)--(5,1) ;
        \draw [very thick, mycolor9] (4.5,-1)--(5.5,1) ;
        \draw [very thick, mycolor9] (5,-1)--(6,1) ;
        \draw [very thick, mycolor9] (5.5,-1)--(6.5,1) ;
        \draw [very thick, mycolor9] (6,-1)--(6.5,0) ;

        \draw [thick, blau_2b] (9,-1)--(8,1) ;
        \draw [thick, blau_2b] (8.5,-1)--(7.5,1) ;
        \draw [thick, blau_2b] (8,-1)--(7,1) ;
        \draw [thick, blau_2b] (7.5,-1)--(6.5,1) ;
        \draw [thick, blau_2b] (7,-1)--(6.5,0) ;
\end{tikzpicture}
    \caption{Spectrum of algorithmic bounds for message identification \cite{AD89}, root identification and root computation \cite{Ahlswede87}.}
    \label{fig:Motiv}
\end{figure}
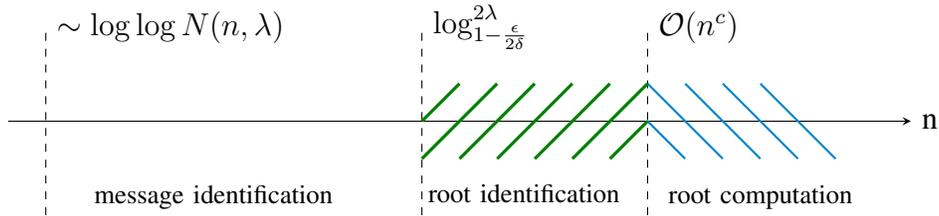
\end{remark}

\subsection{Analysis of The New Lower Bound}
In the following, we analyse the new test and provide several sketch of the inequality provided in Theorem~\ref{Th.Test_Three_Errors}.

\begin{figure}[t]
    \centering
    \setkeys{Gin}{width=0.5\textwidth}
    \subfloat[Lower bounds of previous and new test for $\epsilon_1 < \epsilon$\label{Fig.NewvsOldTest}]{\includegraphics[width=0.5\textwidth]{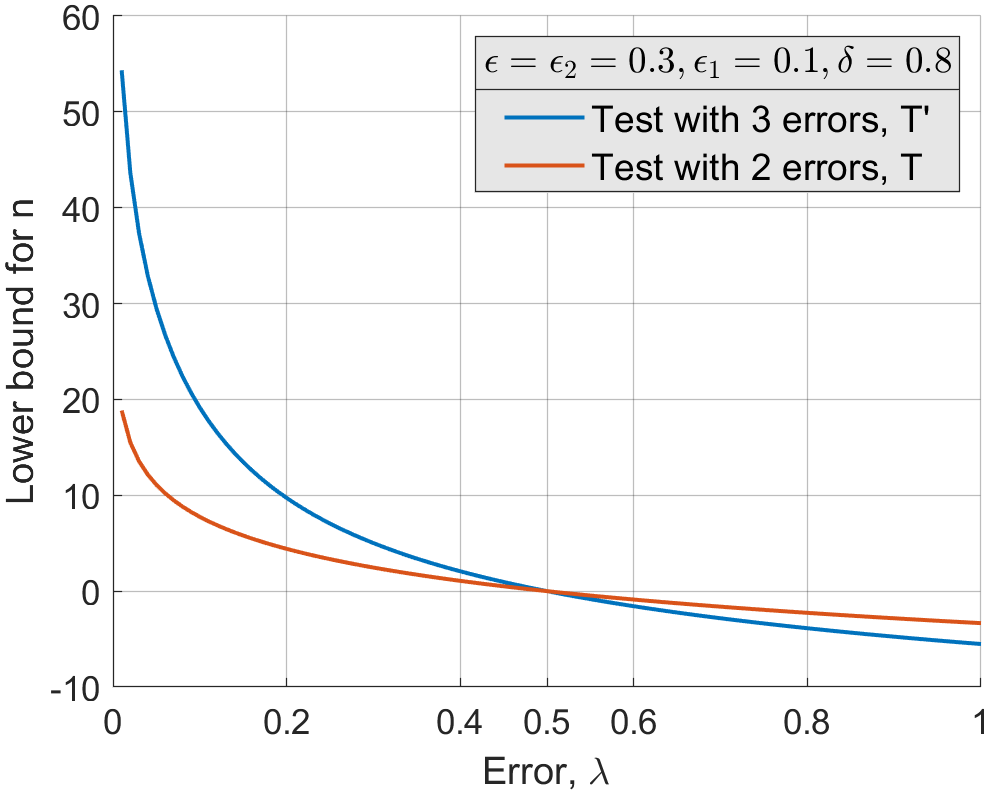}}
    \hfill
    \subfloat[Lower bounds of previous and new test for $\epsilon_1 \geq \epsilon$ with $\epsilon_1 \geq \epsilon$\label{Fig.NewvsOldTest_3}]{\includegraphics[width=0.5\textwidth]{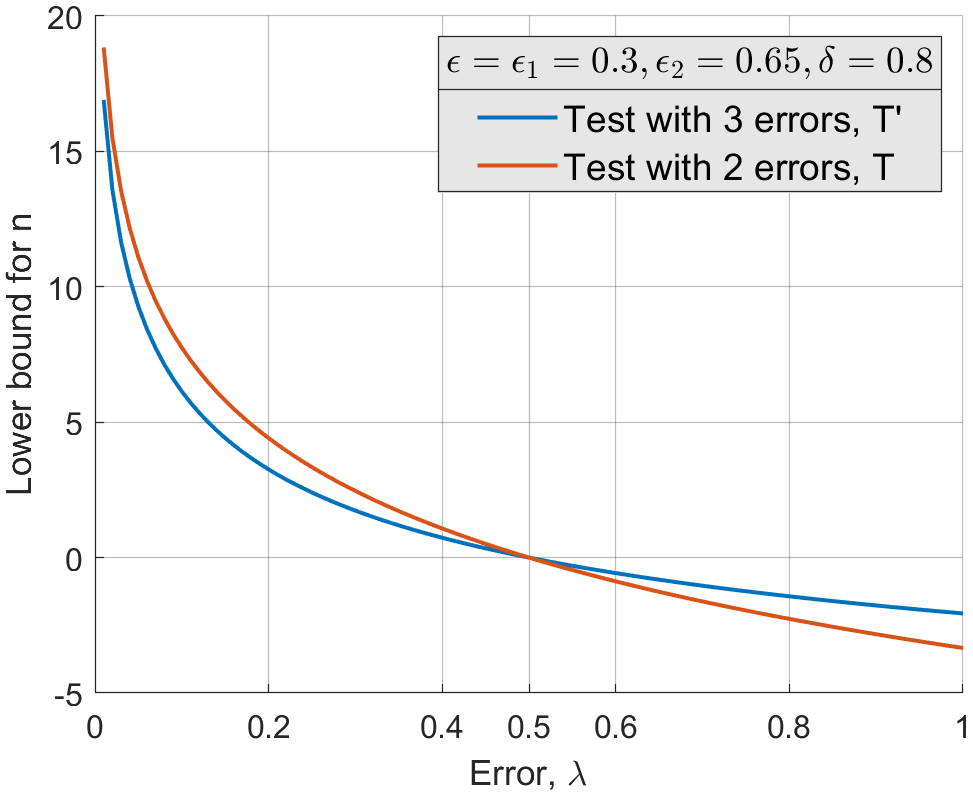}}
    \hfill
    \subfloat[Lower bounds of previous and new test for $\epsilon_1 \geq \epsilon$ with decreasing parameters \label{Fig.NewvsOldTest_3_gap}]{\includegraphics[width=0.5\textwidth]{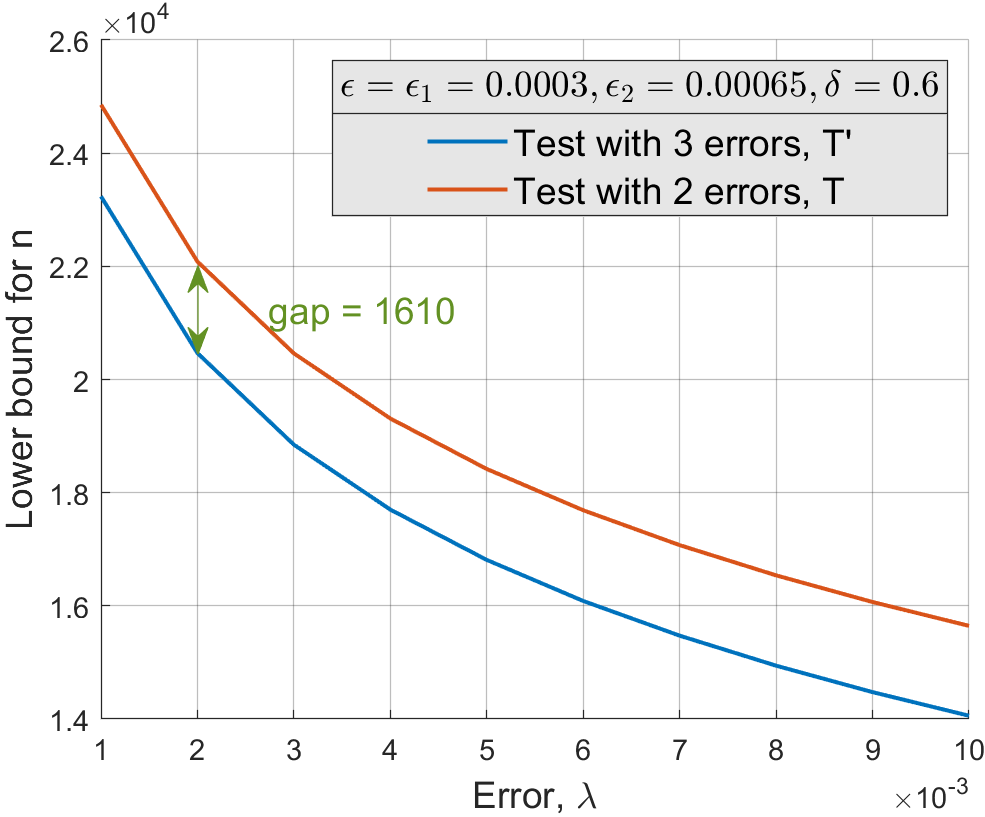}}
    \hfill
    \subfloat[Lower bounds of previous and new test where $\epsilon_1 = 0$\label{Fig.NewvsOldTest_eps1_zero}]{\includegraphics[width=0.5\textwidth]{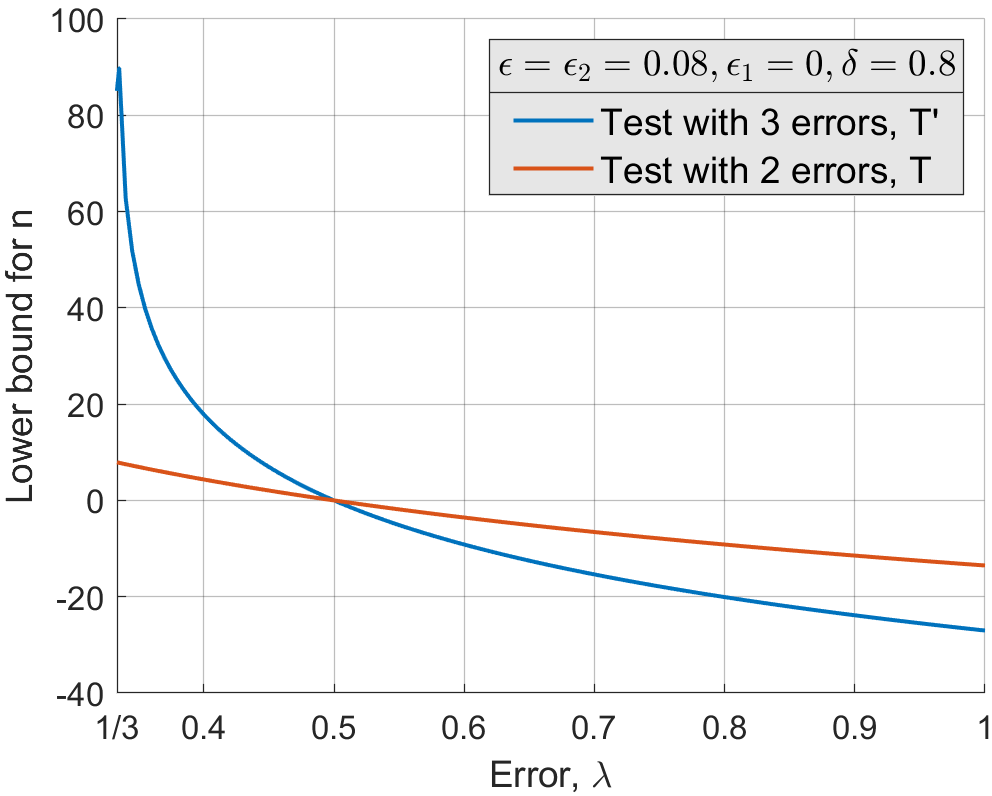}}
    \caption{Comparison of previous and new test for root identification across different test parameters.}
\end{figure}

\begin{itemize}
    \item $\epsilon_1 < \epsilon:$ As Figure \ref{Fig.NewvsOldTest} shows the blue curve is higher than the red one. So the new lower bound found is still higher than the one found with the previous Test. So win this case it is not a better lower bound.
    \item$\epsilon_1 \geq \epsilon:$ As Figure \ref{Fig.NewvsOldTest} shows the blue curve is this time lower than the red one. So we recognize here a better lower bound. This gap does in fact increases when the values of $\epsilon_1, \epsilon_2$ and $\epsilon_3$ decreases. As confirmed by Figure \ref{Fig.NewvsOldTest_3}, for low number of error, $\lambda,$ the previous test requires way more observations, $1610.$ This proves that in a small range of the parameters the new test gives improved result
    \item $\epsilon_1 \leq \epsilon:$ and if the root is not in the silent area in the previous test then the new test outperforms the previous one and requires smaller number of observations.
    \item When the root is in the silent gap of the previous test then there are two possibilities:
    \begin{itemize}
        \item if $\epsilon_1 \ll \epsilon$ then this root will turns out to a non silent region for new Test and we are able to identify it. However the number of $n$ increases.
        \item if $\epsilon_1 \leq \epsilon$ such that the root still in the silent area, which require less number of observations but more complexity and we still cannot identify the root. Hence, as long as the root remains in the non silent area and $\epsilon_1 \leq \epsilon $ the new Test is better.
    \end{itemize}
\item $\lambda > 0.5:$ As Figure~\ref{Fig.NewvsOldTest} shows the red curve which represent the old test is above the blue one which speaks for the new test while in Figure~\ref{Fig.NewvsOldTest_3} the blue curve is above the red one. However in both case they are in the negative part of $n$ which doesn't make much sense since $n$ has to be always positive. We have seen in fact that in the old test $\Uptheta$ in this irregular case that only two observations are needed. So we claim and we hope that by the same arguments applied, that the still some constant is needed.
\end{itemize}

\section{Identification of Roots vs. Identification of Messages}
\label{Sec.V}
In this section, we establish a comparison between identification of messages and identification or roots in terms of different performance parameters. Figures~\ref{Fig.OldTest} and ~\ref{Fig.NewTest} show an interpretation of root identification problem in terms of decoding problem.

\begin{figure}[H]
    \centering
    \includegraphics[width=0.7\textwidth]{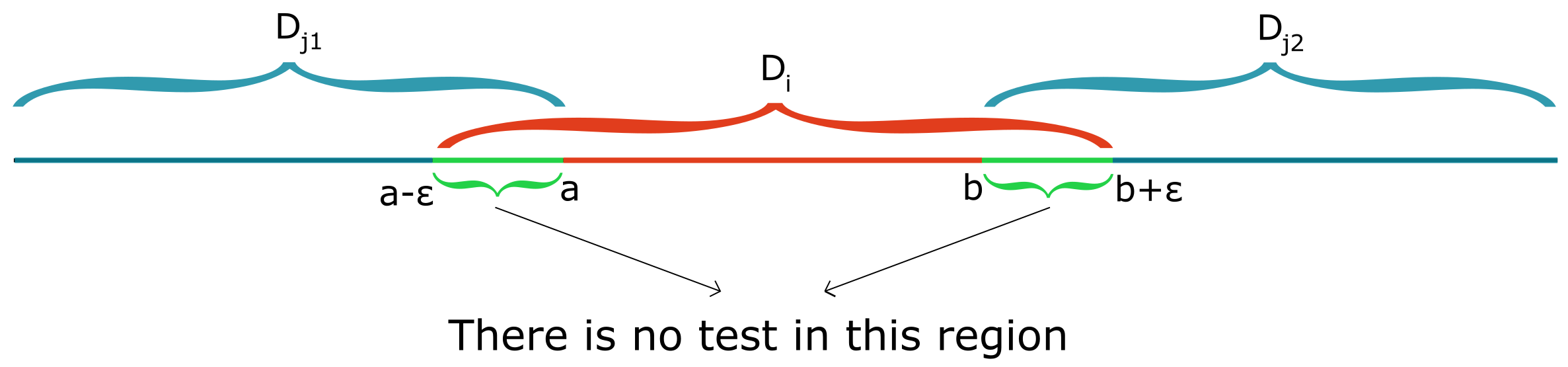}
    \caption{Illustration of identification of roots as decoding problem for the previous test.}
    \label{Fig.OldTest}
\end{figure}

We interpret the silent gap between two intervals as similar to the intersection region between decoding regions in the identification problem \cite{AD89}. Here, $\varepsilon$ is the size of the area around given interval $[a,b].$ This can be justified in the following sense: from one side without this silent area the test $\Uptheta$ would have no meaning (!) and from the other side, this area allow the algorithm to make errors.

\begin{figure}[H]
    \centering
    \includegraphics[width=0.8\textwidth]{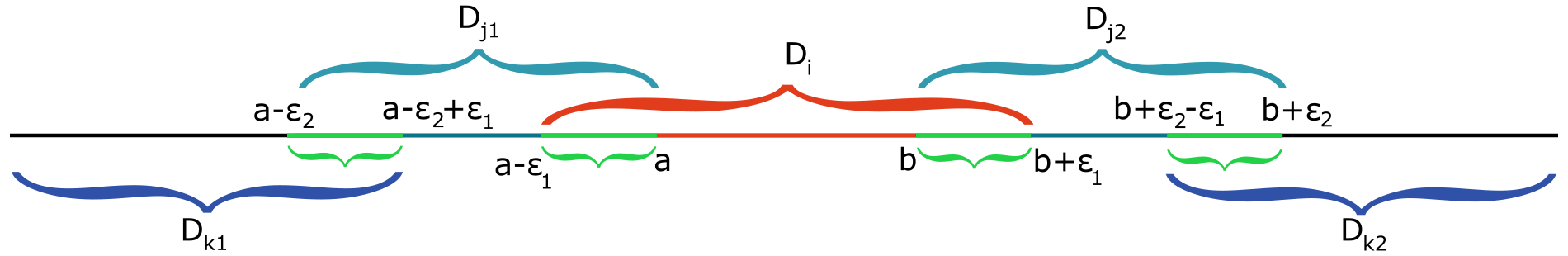}
\caption{Illustration of partitions for new test into different decoding regions.}
\label{Fig.NewTest}
\end{figure}

\begin{table}
\label{Table.Comparison}
    \caption{Comparison of performance parameters}
    \begin{minipage}{.5\linewidth}
      \caption{Identification of messages vs. previous test}
      \centering
      \begin{tabular}{|c|c|}
        \hline
        Identification of Messages & Identification of Roots
        \\
        \hline
        Codeword & Noisy Observation (RV)  \\
        \cellcolor{cell} Codeword Length & \cellcolor{cell} \# of Observations  \\
        Target Message & Target Root\\
        \cellcolor{cell}Errors: $\lambda_1,$ $\lambda_2$ & \cellcolor{cell} Errors: $\lambda_1,$ $\lambda_2$ \\ 
        $\D_i$ & $(a-\epsilon,b+\epsilon)$ \\
       \cellcolor{cell}$\D_{j} = \D_{j1} \cup \D_{j2}$ &  \cellcolor{cell}$(-\infty,a-\epsilon] \cup [b+\epsilon, +\infty) $ \\
        $\D_i \cap \D_j$ &  2 Silent Gaps \\
        \cellcolor{cell} $Q_i(x^n) W^n(y^n|x^n)$  & \cellcolor{cell} Distribution of $Y^\theta_x$ \\ 
        \hline
    \end{tabular}
    \end{minipage}
    \begin{minipage}{.5\linewidth}
      \centering
        \caption{Identification of messages vs. new test}
        \begin{tabular}{|c|c|}
        \hline
        Identification of Messages & Identification of Roots \\
        \hline
        \cellcolor{cell}Errors: $\lambda_1,$ $\lambda_2$ & \cellcolor{cell}Errors: $\lambda_1,$ $\lambda_2,$ $\lambda_3$ \\ 
        $\D_i$ & $(a-\epsilon_1,b+\epsilon_1)$ \\
        \cellcolor{cell}$\D_{j}$ &  \cellcolor{cell}$(a-\epsilon_2,a) \cup (b, b+\epsilon) $ \\
        $\D_{k}$ &  $(-\infty,a-\epsilon_2] \cup [b+\epsilon_2, +\infty) $ \\
        \cellcolor{cell} $\D_i \cap \D_j$ & \cellcolor{cell} 2 Inner Silent Gaps \\
        $\D_j \cap \D_k$ &  2 Outer Silent Gaps \\
        \hline
    \end{tabular}
    \end{minipage}
\end{table}

Next, we introduce the conventional approach for code construction in the identification of messages \cite{Ahlswede89} and then discuss about its differences to the root identification.

\subsection{Coloring Scheme - A Method to Construct the Identification Codes}
In the following, we review a classic method used for the code construction which is proposed in \cite{Ahlswede89}. This review is accomplished for the purpose of understanding an intrinsic property exploited for the code construction in the identification of messages called coloring property.

In \cite{Ahlswede89}, an identification code is constructed via concatenation of two Shannon transmission codes $\C'$ and $\C''.$ By Shannon's coding theorem \cite{S48}, we know that for every $0< \epsilon < C,$ there exists $\delta > 0$ such that for sufficiently large $n$ there is an $(n, M',2^{-n\delta})$-transmission code $\C' = \{(u'_j,\D'_j\,|\,j = 1,\ldots,M')\}$ and an $(\ceil{\sqrt{n}}, M'',2^{-\sqrt{n}\delta})$-transmission code $\C'' = \{(u''_k,\D''_k\,|\,k = 1,\ldots,M'')\}$ where codebook sizes read $M' = \ceil{2^{n(C-\epsilon)}}$ and $M'' = \ceil{2^{\epsilon\sqrt{n}}}.$ Now consider a set of mappings $\{T_i\,|\,i=1,\ldots,N\}, \, T_i : \{1,\ldots,M'\} \rightarrow \{1,\ldots,M''\}$ for all $i \in \{1,\ldots,N\}.$ The mapping $T_i(j)$ which is known to both sender and receiver, is referred to as the color of codeword $i$ under coloring number $j.$ Thus we could construct an identification code $\{(Q(.|i),\D_i)\,|\,i = 1,\ldots,N\}.$ Here $Q(.|i)$ is the uniform distribution on the set of codewords $\U_i = \{u'_j.u''_{T_i(j)}\,|\,j = 1,\ldots, M'\} \subset \X^m$ and a collection of decoding regions $\D_i = \bigcup\limits_{j=1}^{M'} \D'_j \times \D''_{T_i(j)}.$
We select randomly an identification code with such a structure in the following manner:
\begin{equation*}
    Q_i(x^n) = 
    \begin{cases}
    \frac{1}{M'} & \text{if } \exists j : x^n = u'_j.u''_{T_i(j)} \\
    0 & \text{otherwise}
    \end{cases}
\end{equation*}
This means that a coloring can be selected uniformly at random and then the corresponding color of the messages is calculated. The coloring method offers an heuristic rationale for why it is not possible to attain a double logarithmic rate when identifying roots: to decrease the likelihood of the worst-case scenario, we use a set of mostly independent hash functions, i.e., $N$ function $T_i$ from which we select one and send it along with its hash value. That is, we try to break down every potential structure and flatten all possible input characteristics. Thus, using this method to identify roots would be detrimental to the algorithm since it alters the input's original structure.
\section{Conclusions and Discussions}
\label{Sec.VI}

\textbf{Discussions:}
A widely known problem in computational complexity is the exponential gap in time consumption between identifying/verifying a solution and actually finding that solution, particularly evident in the case of \textit{prime factorization} \cite{Crandall05}. For example, it is feasible to verify in polynomial time whether a specified prime number is a factor of the product of two primes. However, to date, no algorithm has been discovered that can perform the factorization itself in polynomial time. When considering the number of inputs as a metric for the size of the problem, it can be asserted that verifying the factorization is a logarithmic problem, whereas the process of finding the factorization is classified as a polynomial problem. This example is similar to our case here, where we found out that identifying the result is not much faster than calculating it and such exponential gap in the speed is not feasible.

\textbf{Conclusions:}
In this study, we enhanced the previous results for logarithmic lower bound on the number of observations by proposing a new test for root identification. We revisit the message identification problem of Ahlswede \cite{AD89} and devise adopted test for identifying roots of a stochastic function. We studied the root identification problem for a special class of functions where only the stochastic knowledge of function is available. A lower bound for number of observation derived in \cite{KleinewaechterISIT97} showed this quantity scales only logarithmic in the test accuracy and error bound. As our key finding, we established a new logarithmic lower bound in the parameters which recovers the previous result as its special case and outperform in several cases. In contrast to identification of messages \cite{AD89}, this logarithmic growth reveal an exponential gap. We examined the possible reason for existence of such gap with respect to the number of observations, namely, a method used in the achievability of identification problem called \textit{coloring scheme} can not be utilized here. We observed that this difficulty arise from considering the uniform distribution as assumed for the identification problem, here flatten the structure of input of the algorithm. That is, using the conventional techniques developed for the achievability construction, e.g., coloring and hash function methods are ineffective in the context of root identification since the original structure of the input is not preserved by simplifying/flattening the algorithm's structure. 

\section*{}
\bibliographystyle{IEEEtran}
\bibliography{Lit}

\begin{thebibliography}{10}
\providecommand{\url}[1]{#1}
\csname url@samestyle\endcsname
\providecommand{\newblock}{\relax}
\providecommand{\bibinfo}[2]{#2}
\providecommand{\BIBentrySTDinterwordspacing}{\spaceskip=0pt\relax}
\providecommand{\BIBentryALTinterwordstretchfactor}{4}
\providecommand{\BIBentryALTinterwordspacing}{\spaceskip=\fontdimen2\font plus
\BIBentryALTinterwordstretchfactor\fontdimen3\font minus
  \fontdimen4\font\relax}
\providecommand{\BIBforeignlanguage}[2]{{%
\expandafter\ifx\csname l@#1\endcsname\relax
\typeout{** WARNING: IEEEtran.bst: No hyphenation pattern has been}%
\typeout{** loaded for the language `#1'. Using the pattern for}%
\typeout{** the default language instead.}%
\else
\language=\csname l@#1\endcsname
\fi
#2}}
\providecommand{\BIBdecl}{\relax}
\BIBdecl

\bibitem{AD89}
R.~{Ahlswede} and G.~{Dueck}, ``Identification via {C}hannels,'' \emph{IEEE
  Trans. Inf. Theory}, vol.~35, no.~1, pp. 15--29, 1989.

\bibitem{Salariseddigh_PhD_Diss}
\BIBentryALTinterwordspacing
M.~J. Salariseddigh, ``Deterministic {I}dentification {F}or {M}olecular
  {C}ommunications,'' Ph.D. dissertation, Technical University of Munich, 2023.
  [Online]. Available: \url{https://mediatum.ub.tum.de/?id=1743195}
\BIBentrySTDinterwordspacing

\bibitem{Salariseddigh23_BSC_Future_Internet}
\BIBentryALTinterwordspacing
M.~J. Salariseddigh, O.~Dabbabi, C.~Deppe, and H.~Boche, ``Deterministic
  {K}-{I}dentification for {F}uture {C}ommunication {N}etworks: {T}he {B}inary
  {S}ymmetric {C}hannel {R}esults,'' \emph{Future {I}nternet}, vol.~16, no.~3,
  2024. [Online]. Available: \url{https://www.mdpi.com/1999-5903/16/3/78}
\BIBentrySTDinterwordspacing

\bibitem{Hoffmann71}
K.~Hoffmann and R.~A. Kunze, \emph{Linear {A}lgebra}.\hskip 1em plus 0.5em
  minus 0.4em\relax Prentice-Hall New Jersey, 1971.

\bibitem{KleinewaechterISIT97}
C.~Kleinw{\"a}chter, ``Identification of {M}essages and {I}dentifying {Z}eroes
  of a {F}unction,'' in \emph{Int. Symp. Inf. Theory}, 1997, pp. 156--.

\bibitem{McNamee13}
J.~M. McNamee and V.~Pan, \emph{Numerical {M}ethods for {R}oots of
  {P}olynomials-{P}art I}.\hskip 1em plus 0.5em minus 0.4em\relax Newnes, 2013.

\bibitem{Newton}
I.~Newton, \emph{The {M}athematical {P}apers of {I}saac {N}ewton}, D.~T.
  Whiteside, Ed.\hskip 1em plus 0.5em minus 0.4em\relax Cambridge University
  Press, 1967-1976, vol. I-VII.

\bibitem{Robbins51}
H.~Robbins and S.~Monro, ``A {S}tochastic {A}pproximation {M}ethod,''
  \emph{Ann. Math. Stat.}, pp. 400--407, 1951.

\bibitem{Apostol67}
T.~M. Apostol, \emph{Calculus: {O}ne-{V}ariable {C}alculus, {W}ith an
  {I}ntroduction to {L}inear {A}lgebra}.\hskip 1em plus 0.5em minus 0.4em\relax
  John Wiley \& Sons, 1967, vol.~I.

\bibitem{Kiefer52}
\BIBentryALTinterwordspacing
J.~Kiefer and J.~Wolfowitz, ``Stochastic {E}stimation of the {M}aximum of a
  {R}egression {F}unction,'' \emph{Ann. Math. Statist.}, vol.~23, no.~3, pp.
  462--466, 09 1952. [Online]. Available:
  \url{https://doi.org/10.1214/aoms/1177729392}
\BIBentrySTDinterwordspacing

\bibitem{Ahlswede87}
R.~Ahlswede and I.~Wegener, \emph{Search {P}roblems}.\hskip 1em plus 0.5em
  minus 0.4em\relax John Wiley \& Sons, Inc., 1987.

\bibitem{Ahlswede89}
R.~Ahlswede and G.~Dueck, ``Identification in the {P}resence of {F}eedback-{A}
  {D}iscovery of {N}ew {C}apacity {F}ormulas,'' \emph{IEEE Trans. Inf. Theory},
  vol.~35, no.~1, pp. 30--36, 1989.

\bibitem{S48}
C.~E. {Shannon}, ``A {M}athematical {T}heory of {C}ommunication,'' \emph{Bell
  Sys. Tech. J.}, vol.~27, no.~3, pp. 379--423, 1948.

\bibitem{Crandall05}
R.~E. Crandall and C.~Pomerance, \emph{Prime {N}umbers: {A} {C}omputational
  {P}erspective}.\hskip 1em plus 0.5em minus 0.4em\relax Springer, 2005,
  vol.~2.

\end{thebibliography}

\end{document}